\newlength{\figwidth}
\newtheorem{theorem}{Theorem}
\def\delequal{\mathrel{\ensurestackMath{\stackon[1pt]{=}{\scriptscriptstyle\Delta}}}}
\begin{document}
\setlength{\pdfpagewidth}{8.5in}
\setlength{\pdfpageheight}{11in}
\title{Opportunistic Wireless Energy Transfer in Point-to-Point Links}
\author{

\IEEEauthorblockN{Amanthi Thudugalage, Saman Atapattu and Jamie Evans\\}
\IEEEauthorblockA{Department of Electrical and Electronic Engineering, University of Melbourne, Victoria, Australia.\\}
\IEEEauthorblockA{Email:a.thudugalage@student.unimelb.edu.au, \{saman.atapattu, jse\}@unimelb.edu.au}}


\maketitle

\begin{abstract}

In this paper we consider wireless energy transfer for a point-to-point link. The energy transmitter sees a finite number of independent channel realizations, and, armed with (causal) knowledge of the channels, must decide how much energy to transmit in each time slot. The objective is to maximize the expected energy transferred to the receiver at the end of the time period. We show that the optimal energy allocation policy is binary: the transmitter sends no energy or all energy in a slot with this decision based on a simple threshold on the channel. As intuition demands, this threshold for transmission decreases as we move closer to the last available time slot. The performance of the optimal scheme is studied both analytically and numerically.

%
\end{abstract}
\begin{keywords}
Channel estimation, single-input single-output (SISO) network, threshold policy, wireless energy transfer.
\end{keywords}

\section{Introduction}
Wireless sensor networks (WSNs) are widely being used for sensing in smart environments by using emerging wireless communication techniques, e.g., \cite{Guo2018guotcom} and references therein. 
These techniques can be utilized in rescue missions, crowd monitoring, disaster area monitoring and so on. 
In such scenarios, it may be required to gather information from an energy scarce node such as Internet of Things (IoT), backscatter communications,  unmanned aerial vehicles and multi-hop networks operating in a challenging environment using point-to-point links~\cite{Sudheesh2017,Zhao2018coml,Atapattu2019twc}. 
The duration of such communication sessions can be prolonged by incorporating wireless energy transfer (WET) schemes. 
Unlike other renewable energy harvesting (EH) techniques, WET provides predictable and controllable energy which increases reliability~\cite{Xiao2015,AmanthiICC2016}. However, the amount of energy that can be harvested through WET or simultaneous information and power transmission (SWIPT) is relatively low in both point-to-point links and relaying links, mainly due to the path-loss~\cite{Chen2015,Saman2,Atapattu2016twc}. 

With the advancement of low power circuitry and development of radio frequency (RF) energy harvesting circuits, significant research attention has been drawn to networks with WET~\cite{Huang2014}
In~\cite{Liu2014}, a harvest-then-transmit protocol is introduced in which the access point first broadcasts wireless power to all users via energy beamforming in the downlink (DL), and the users then use harvested energy to send their independent information to the access point simultaneously in the uplink (UL). In order to guarantee rate fairness, the minimum throughput is maximized by joint design of DL-UL time allocation, the DL energy beamforming, the UL transmit power allocation, and the receive beamforming. 

Various opportunistic energy transfer schemes are considered in the literature for relay networks~\cite{Wang2015} and cognitive radio networks~\cite{Wu2015,Zhai2017}. However, point-to-point link  is the focus of this paper. Such a network is considered in \cite{Ozel2011}, and two objectives, namely  i) maximizing the throughput by a deadline; and ii) minimizing the transmission completion time, are considered. Energy allocation over a finite horizon is considered based on channel conditions and time varying energy sources in~\cite{Ho2012}. The throughput is maximized by considering causal and non-causal channel state information (CSI). 
Another design of online transmission strategies for slotted energy harvesting is considered in \cite{ Gomez2017}. This work focuses on minimizing the gap between the maximum rate obtained using offline and online policies. These works mainly focus on maximizing throughput. However, in some applications, the sensor nodes may be highly restricted to perform the EH before a given time (i.e., a time deadline) in order to perform a specific task. In such situations maximizing the harvested energy with time constraints may be of high importance. To the best of our knowledge such a problem has not been studied in the literature.  
Therefore, this paper focuses on finite horizon WET with transmit energy constraints. 

This paper considers a wireless network with an energy-constrained  transmitter and a time-constrained EH user. We study the problem of maximizing the expected total harvested energy  over a finite horizon with causal CSI. We first analyze the problem without channel estimation energy,  and we prove the optimum energy transfer policy to be a threshold policy. We then analyze the optimization problem with channel estimation energy by assuming the  threshold policy. We also analyze the genie-aided scheme as a benchmark. %

The rest of this paper is organized as follows. Section~\ref{sec:SysModel} presents the system model. Section~\ref{sec:Noestimate} solves optimization problems without and with channel estimation energy. Section~\ref{sec:MaxMinResults} presents numerical and simulation results followed by the concluding remark in Section~\ref{sec:Conclusion}.

\section{System Model}
\label{sec:SysModel}

This section describes the  network model and the corresponding analytical model  for a point-to-point SISO wireless network used for EH.  
\subsection{Network Model}
\begin{figure}
  \centering
  \subfloat[]{\label{f:System_Model}\includegraphics[width=1\figwidth]{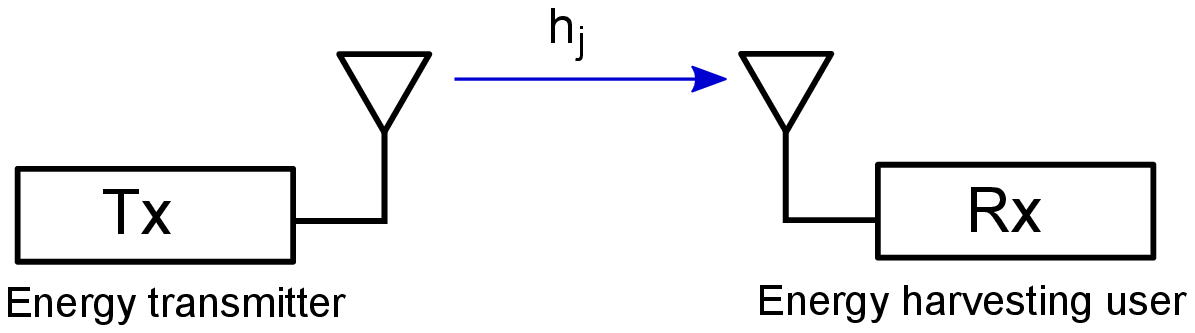}} \\
  \subfloat[]{\label{f:Time_slots}\includegraphics[width=1\figwidth]{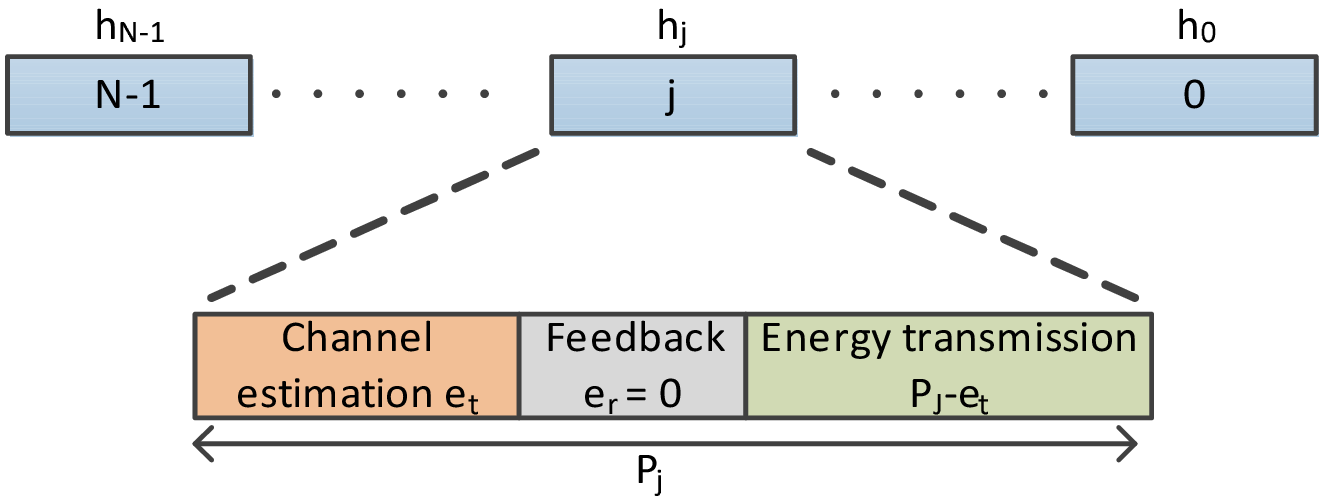}}
\caption{ (a) A point-to-point SISO wireless network for EH; and (b) Frame structure for the energy transfer protocol.}
\end{figure}
We consider a point-to-point SISO wireless network as shown in Fig.~\ref{f:System_Model}.  The network consists of a single-antenna power transmitter denoted as Tx, and a single-antenna EH user denoted as Rx. 
The power transmitter has a fixed energy source with a maximum available energy level $P$. The time varying channel between Tx and Rx is considered to be block faded with $N$ block faded time frames before the deadline. This transmitter can transmit power to the Rx in any of the $N$ frames and it is the precise allocation of the available transmit energy $P$ between the $N$ frames that is the topic of this paper. This means that Tx can transmit all $P$ energy to Rx by using maximum $N$ frames. 
Then, the EH user can harvest energy within those time frames which are indexed from $j=N-1$ to $j=0$ in Fig.~\ref{f:Time_slots}. It is important to note that the indexing of the frames is in the descending order, so that the frame index reveals the number of frames remaining for the future. 
We assume that the distance between Tx and Rx is fixed in the time duration considered. Therefore, the path-loss factor is assumed to be the same for all $N$ frames. Thus, for presentation simplicity, we omit the path-loss factor without loss of generality. The block fading channel between Tx and Rx at the time frame $j$ is $h_j$ which is a complex Gaussian random variable with zero-mean and unit-variance, i.e., $h_{j} \sim \mathcal{CN}(0,1)$. All $h_j$s are independent and identically distributed (i.i.d.) random variables.

 Since we consider an opportunistic wireless energy transfer, the energy transmitter should know the channel $h_j$ before energy transfer to the EH user. We assume perfect channel estimation is carried out at Rx. Since the EH user does not have any energy source, at the beginning of each frame, the energy transmitter transmits $e_t$ amount of energy towards Rx for the channel estimation. We denote the total energy transmitted by Tx in frame $j$ as $\rho_j$, which includes both energy transmitted for channel estimation and harvesting. Using that received signal, the EH user estimates the channel $h_j$, and calculates the transmit energy required for energy transfer within the same frame, $\rho_j - e_t$. Then, the EH user informs this information to Tx during the feedback phase. We assume that the energy associated in the feedback phase is negligible, i.e., $e_r=0$, and the EH user is sufficiently charged to perform the feedback. Then, the energy transmitter transmits, $\rho_j-e_t$ amount of energy for EH at Rx.

\subsection{Analytical Model}

Since the energy level for each set of $N$ frames at Tx is limited to $P$, the available energy at Tx at the beginning of frame $j$ which depends on all the transmit energy of previous frames, can be given as $P_{j,(available)}=P-\sum_{i=j+1}^{N-1} \rho_i$. 
Then, the energy transmitted at the frame $j$, $\rho_j$ also depends on the energy transmitted at the previous frames. Although, this $\rho_j$ may be a function of all previous channel gains, i.e., $\rho_j({h}_{N-1},\cdots,{h}_{j-1} ) $, we use $\rho_j$ for the sake of simplicity. We denote the state of energy availability at frame $j$ as $s_j$, where $s_j=1$ if there is energy available at the frame $j$, i.e., $P-\sum_{i=j+1}^{N-1} s_i \rho_i > 0$, and $s_j=0$ if the energy source has run out. Therefore, $s_j$ can be given as 
\begin{equation}\label{eq:S}
s_j=\left\{
	\begin{array}{ll}
		1,  &  \text{if} ~ P-\sum_{i=j+1}^{N-1} s_i \rho_i > 0  \\
		0, & \text{otherwise.}
	\end{array}
\right.
\end{equation}
The importance of $s_j$ lies in the case of $\rho_j< e_t$ which means that the leftover energy at frame $j$ is not sufficient for the channel estimation. In such situations, the state $s_j$ sets to zero in order to prevent negative value for the harvested energy.

The harvested energy at the frame $j$ can be given as $
 E_j= \eta_j s_j \left(\rho_j - e_t\right)| h_j|^2 + \sigma^2_{n_j}$
where, $\eta_j$ is the energy conversion efficiency, and $\sigma^2_{n_j}$ is the energy associated with received noise. Without loss of generality, we assume perfect energy conversion efficiency, i.e., $\eta_j=1$, and the energy associated with noise is negligible, i.e., $\sigma^2_{n_j}=0$. Then, the expected total harvested energy for $N$ frames can be given as 
\begin{equation}\label{eq:E_sum}
 \tilde{E}_T= \sum_{j=0}^{N-1} \mathbb{E}_{h_j} \Big[ s_j \left(\rho_j - e_t\right) | h_j|^2\Big],
\end{equation}
where $\mathbb{E}_{h_j}(.)$  is the expectation operation with respect to $h_j$. We can achieve maximum harvested energy within $N$ frames by maximizing $ \tilde{E}_T$. This can be done by optimizing the energy transfer at each frame, $\rho_j$ based on the past and present channel knowledge only, i.e., $| h_i |^2$ $i\in[N-1,j]$. However, knowledge of the future channel gains, i.e., $| h_i |^2$ $i\in[j-1,0]$, are unknown. This procedure may be called as {\it opportunistic} energy transfer which is discussed in detail in the next section.

\section{Opportunistic Energy transfer}
\label{sec:Noestimate}
In this section, we maximize the expected total harvested energy over $N$ time frames with the transmit energy budget $P$. In particular, we optimize the amount of energy to be transmitted in each frame, i.e., $\rho_j$, $\forall j \in[0,N-1]$. In general, the corresponding optimization problem can be given as 
\begin{subequations}\label{eq:OptE_sum_et}
\begin{alignat}{4}
 \underset{~{ \rho_j, s_j } }{\mathrm{max}} ~~~ &\tilde{E}_T \label{eq:OptE_sum_et_a}\\
 \text{s.~t.} ~~~
 & \sum_{j=0}^{N-1} s_j \rho_j \leq P, \label{eq:OptE_sum_et_b} \\
 & \rho_j \geq e_t ~~ \forall j=\{j; j \in [0, N-1],  s_j\neq 0\},\label{eq:OptE_sum_et_c}
\end{alignat}
\end{subequations}
where \eqref{eq:OptE_sum_et_b} is for total transmit energy constraint, and \eqref{eq:OptE_sum_et_c} is to ensure that the available energy is adequate for channel estimation when $s_j \neq 0$.

In the following subsections, we solve this optimization problem: i) without channel estimation energy ($e_t=0$), and ii) with channel estimation energy ($e_t \neq 0$). 
\subsection{Without Channel Estimation Energy ($e_t=0$)}
\label{sec:Problem}
When $e_t=0$, the  scenario $\rho_j < e_t$  can be handled by using $\rho_j >0$ constraint. Then, we can use $s_j=1, ~\forall j \in [0, N-1]$ without any harm for the original optimization problem.  Further, the total transmit energy constraint \eqref{eq:OptE_sum_et_b} should be met with an equality at the optimum point because $\rho_j$s can be scaled such that the equality condition is met. Hence, the optimization problem \eqref{eq:OptE_sum_et} can be given as
\begin{equation}\label{eq:OptE_sum2}
\begin{aligned}
 \underset{~{ \rho_j }}{\mathrm{max}} ~~~ & \mathbb{E} \Bigg[{ \displaystyle      \sum_{j=0}^{N-1} \rho_j | h_j|^2   }\Bigg] \\
 \text{s.~t.} ~~~
 & \sum_{j=0}^{N-1} \rho_j = P, ~\rho_j \geq 0, ~~\forall j.
\end{aligned}
\end{equation}
As parameters $\rho_j$s are sequential, we can reformulate \eqref{eq:OptE_sum2} in order to obtain a recursive formula. Let $R_{N-1}(P)$ denote the optimum value of the objective function in~\eqref{eq:OptE_sum2}. At the beginning of frame $N-1$, the optimum energy for that frame, say $\rho^*_{N-1}$, must be calculated first. Then, at the beginning of next frame, i.e., $ N-2$, the available energy at the energy source must be updated as $P-\rho^*_{N-1}$, and the optimum energy, $\rho^*_{N-2}$, must be calculated as previous case. Hence, an equivalent  optimization problem for \eqref{eq:OptE_sum2} can be given as 
\begin{equation}\label{eq:EN(P)}
\begin{aligned}
R_{N-1}(P)= & \underset{~ \rho_{N-1}}{~~\mathrm{max} ~~~}  \mathbb{E}_{h_{N-1}}\Big[\rho_{N-1} | h_{N-1} |^2   \\ & ~~~~~~~~~~~~~~~~~+ R_{N-2}(P-\rho_{N-1}) \Big]\\
 \text{s.~t.} ~~~
 & 0 \leq \rho_{N-1} \leq P.
\end{aligned}
\end{equation}
This problem has a recursive structure, and it has only one decision variable $\rho_{N-1}$, which is the transmit energy at the frame $N-1$. It is worth noting that the function $R_{N-1}(P)$ is a deterministic function as $P$ is deterministic. However, $R_{N-2}(P-\rho_{N-1})$ is a random quantity as the variable, $\rho_{N-1}$  which depends on the random variable $h_{N-1}$ is random. Therefore, the expectation operation, $\mathbb{E}_{h_{N-1}}[.]$  applies to $R_{N-2}(P-\rho_{N-1})$ as well.

\subsubsection{Optimum Solution}
\label{sec:Solution}
Theorem~\ref{t:th_no_et} gives the optimum policy and the optimum harvested energy for the problem in \eqref{eq:EN(P)}.

\begin{theorem}\label{t:th_no_et}
Without channel estimation energy ($e_t=0$), the optimum expected total harvested energy using $N$ frames over block fading Rayleigh channels is 
\begin{equation}\label{eq:Sol}
R_{N-1}(P)=P \sum_{j=0}^{N-1}c_j
\end{equation}
where $c_0=1$ and $c_j= e^{-\sum_{i=0}^{j-1} c_i}$. The optimum transmit energy at frame $j$ is 
\begin{equation}\label{eq:cj}
\rho_j^*=\left\{
	\begin{array}{ll}
		P-\sum_{i=j+1}^{N-1}\rho_i,  &  | h_j|^2 \geq \sum_{i=0}^{j-1}c_i,    \\
		0,  &  \text{otherwise.}  \\
	\end{array}
\right.
\end{equation}
\end{theorem}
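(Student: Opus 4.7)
My plan is to prove Theorem~\ref{t:th_no_et} by backward induction on the frame index $k$, using the general version of the dynamic programming recursion in \eqref{eq:EN(P)}, namely $R_k(P)=\max_{0\le \rho_k\le P}\mathbb{E}_{h_k}[\rho_k|h_k|^2+R_{k-1}(P-\rho_k)]$ for every $k\in\{0,1,\ldots,N-1\}$. The induction hypothesis I want to carry forward is the joint statement that $R_{k-1}(P)=P\sum_{i=0}^{k-1}c_i$ and that the optimal policy at frame $k-1$ is the threshold rule in \eqref{eq:cj}.

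The base case is $k=0$: only the final frame remains, so all leftover energy must be transmitted and $R_0(P)=P\,\mathbb{E}[|h_0|^2]=P$, which matches $Pc_0$ since $c_0=1$. The threshold condition $|h_0|^2\ge\sum_{i=0}^{-1}c_i=0$ is trivially satisfied, consistent with always transmitting in the last slot.

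For the inductive step I would assume $R_{k-1}(P)=P\sum_{i=0}^{k-1}c_i$ for every $P\ge 0$, substitute into the Bellman recursion, and exploit that the continuation value is \emph{linear} in its argument. Writing $S_k:=\sum_{i=0}^{k-1}c_i$, the objective collapses to $PS_k+\mathbb{E}_{h_k}[\rho_k(|h_k|^2-S_k)]$, which is pointwise linear in $\rho_k$ for each realization of $h_k$. Because the decision may depend causally on $h_k$, the pointwise optimum is bang-bang: $\rho_k^*=P$ when $|h_k|^2\ge S_k$ and $\rho_k^*=0$ otherwise. This is exactly the rule in \eqref{eq:cj}, where in the live forward-running problem $P$ stands for the energy left unspent by earlier frames, i.e.\ $P-\sum_{i=j+1}^{N-1}\rho_i$.

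To close the induction I would evaluate the expected reward under this threshold rule, using that $|h_k|^2\sim\mathrm{Exp}(1)$. A short integration by parts gives $\int_{S_k}^{\infty}xe^{-x}dx=(S_k+1)e^{-S_k}$, and combining the two branches yields $R_k(P)=P(S_k+e^{-S_k})$. Defining $c_k:=e^{-S_k}=e^{-\sum_{i=0}^{k-1}c_i}$ then recovers the recursion stated in the theorem and gives $R_k(P)=P\sum_{i=0}^{k}c_i$, closing the induction. The only conceptually delicate step is recognizing that the linearity of $R_{k-1}(\cdot)$ in $P$, which the induction delivers for free, is precisely what turns each per-slot problem into a pointwise scalar linear program and therefore forces a bang-bang optimum; once that observation is in place, the remaining work is routine calculus on the exponential density.
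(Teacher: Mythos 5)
Your proposal is correct and follows essentially the same route as the paper: induction on the number of remaining frames, substitution of the linear continuation value $R_{k-1}(P-\rho_k)=(P-\rho_k)\sum_{i=0}^{k-1}c_i$ into the Bellman recursion, a pointwise bang-bang argument on the resulting linear-in-$\rho_k$ integrand, and evaluation of the exponential tail integral to recover $c_k=e^{-S_k}$. Your explicit observation that the linearity of the value function in the budget is what forces the all-or-nothing optimum is exactly the (implicit) engine of the paper's step from \eqref{eq:En(P)int} to \eqref{eq:En+1(P)_optW4}.
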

 
\begin{proof}
We use method of induction to prove Theorem~\ref{t:th_no_et}. 

Base Case: For one frame ($N=1$) over Rayleigh fading, problem in  \eqref{eq:EN(P)} can be given as
\begin{equation*}\label{eq:E0(P)int}
\begin{aligned}
R_{0}(P)=  \underset{~ \rho_{0}(x)}{\mathrm{max}}  &\int_o^\infty \rho_{0}(x)x e^{-x} dx   \\
 \text{s.~t.} ~~~
 & 0 \leq \rho_{0}(x) \leq P.
\end{aligned}
\end{equation*}
It is obvious that the optimum transmit energy $\rho_0^*(x)$ is $P$, i.e.,  all the available energy is transmitted in the same frame regardless of the channel condition. Therefore, $R_0(P)=P c_0$, where $c_0=\int_0^\infty x e^{-x}dx= 1$ which is the mean of the channel gain. 

Inductive hypothesis: Assume that the solution in ~\eqref{eq:Sol} is true for $N=n$. Then, we have
\begin{equation}\label{eq:Sol1}
R_{n-1}(P)=P \sum_{j=0}^{n-1}c_j.
\end{equation}
with $\rho^*_j=P-\sum_{i=j+1}^{n-1} \rho_i$ if $|h_j|^2 \geq \sum_{j=0}^{j-1} c_i$; and $\rho^*_j=0$ otherwise.

Inductive Step: Consider the case of $N=n+1$. The objective function in \eqref{eq:EN(P)} can be written as
\begin{equation}\label{eq:En+1(P)_optW}
\begin{aligned}
R_{n}(P) = & \mathbb{E}_{h_{n}}\Big[\rho_{n} | h_{n} |^2  +  R_{n-1}(P-\rho_{n}) \Big] \\
 \stackrel{(a)}{=} &  \mathbb{E}_{h_{n}}\left[\rho_{n} | h_{n} |^2  +  (P-\rho_{n}) \sum_{j=0}^{n-1}c_j \right] \\ 
\stackrel{(b)}{=} &\int\limits_0^\infty \rho_{n}(x) \Big(x -\sum_{j=0}^{n-1}c_j \Big) e^{-x} dx +  P \sum_{j=0}^{n-1}c_j  \\
\end{aligned}
\end{equation}
where (a) follows by substituting from \eqref{eq:Sol1},  and (b) follows by  rearranging the terms and using $\mathbb{E}[g(x)]=\int g(x)f(x) dx$.  Here $f(x)$ is the probability density function (PDF) of $x$. Thus, the optimization problem can be given as 
{\small\begin{equation}\label{eq:En(P)int}
\begin{aligned}
R_{n}(P)=  \underset{~ \rho_{n}(x)}{\mathrm{max}}  &\int\limits_0^\infty \rho_{n}(x) \Big(x -\sum_{j=0}^{n-1}c_j \Big) e^{-x} dx +  P \sum_{j=0}^{n-1}c_j  \\
 \text{s.~t.} ~~~
 & 0 \leq \rho_{n}(x) \leq P.
\end{aligned}
\end{equation}}
For maximization, we can note that $\rho_n=0$ if $x < \sum_{j=0}^{n-1}c_j$; and $\rho_n=P$ otherwise. Then, we have
\begin{equation}\label{eq:En+1(P)_optW4}
\begin{aligned}
R_{n}(P)= & P\int\limits_{\sum_{j=0}^{n-1}c_j}^\infty  \Big(x -\sum_{j=0}^{n-1}c_j \Big) e^{-x} dx +  P \sum_{j=0}^{n-1}c_j  \\
= & P e^{-\sum_{j=0}^{n-1}c_j}+  P \sum_{j=0}^{n-1}c_j ~~
= P \sum_{j=0}^{n}c_j
\end{aligned}
\end{equation}
where the last equality follows by substituting  $c_j=e^{ -\sum_{i=0}^{j-1} c_i}$ with $c_0=1$. Based on the rule of induction, Theorem~\ref{t:th_no_et} is true for all $N > 0$. 
This completes the proof.
\end{proof}

Theorem~\ref{t:th_no_et} implies that, if harvestable energy of current frame (say frame $j$), $P| h_j|^2$, is greater than the expected total harvestable energy from the remaining frames, all the available energy $\rho_j^*=P$ will be transmitted at frame $j$. This can be referred to as an all-or-nothing threshold policy, where all the available energy will be transmitted at frame $j$ if the threshold is exceeded, otherwise  remain idle.  This means that the optimum transmit energy at each frame is given by a threshold policy. Specifically, the threshold at frame $j$ is
\begin{equation}\label{eq:thresh_no_et}
\gamma_j=\sum_{i=0}^{j-1} c_i
\end{equation}
 where $c_0=1$ and $c_i=e^{ -\sum_{l=0}^{i-1} c_l}$. 
 
\subsubsection{Genie-aided Energy Transfer}
\label{subsubsec:genie}
For a performance comparison, we consider an energy transfer scheme with non-causal channel state information. This scheme is addressed as the genie-aided energy transfer scheme hereafter. If the energy transmitter knows all channel gains $h_j$s, $~\forall j=0,\cdots,N-1$, energy transmitter may transmit all $P$ energy at the frame with maximum channel gain. We denote that frame index as $j^*$ where $j^*=  \displaystyle{\mathrm{arg}~\mathrm{max}_j\left(|h_{N-1}|^2,\cdots,|h_0|^2 \right)}$. Then, the expected harvested energy using $N$ frames can be given as 
\begin{equation}\label{eq:gen}
\begin{aligned}
\tilde{E}_G &=  P \mathbb{E}_{h_{j^*}}\left[|h_{j^*}|^2\right] = P\int_0^\infty xf_{|h_{j^*}|^2}(x) dx\\
& = P \int_0^\infty x NPe^{-x}(1-e^{-x})^{N-1} = P\sum_{n=1}^{N}\frac{1}{n}.
\end{aligned}
\end{equation}

\subsubsection{Asymptotic Analysis}
\label{subsubsec:SISORayleigh}

In this section, we analyze the asymptotic behavior of the optimum harvested energy for both opportunistic and genie-aided energy transfer schemes, when the number of frames becomes infinitely large, i.e., $N \rightarrow \infty$. Theorem~\ref{t:Asym} gives the maximum expected total harvested energy convergence with $N$.

\begin{theorem}\label{t:Asym}
Without channel estimation energy ($e_t=0$), the optimum expected total harvested energy converges with $N \rightarrow \infty$ as

For the opportunistic energy transfer:
\begin{equation}\label{eq:SISOasym}
\lim_{N\rightarrow \infty} R_{N-1}(P) =P\ln(N)
\end{equation}

For genie-aided energy transfer:
\begin{equation}\label{eq:SISOasym_gen}
\lim_{N\rightarrow \infty} R_{N-1}(P) =P\gamma+P\ln(N)
\end{equation}
where ${\gamma}$ is the Euler-Mascheroni constant (${\gamma}=0.5772...$)\cite{MORTICI2010}.
\end{theorem}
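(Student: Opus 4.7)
My plan is to handle the two asymptotics separately, since the genie-aided case reduces to a classical identity while the opportunistic case requires analyzing a nonlinear scalar recursion.

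\textbf{Genie-aided case.} The closed form in \eqref{eq:gen} already gives $\tilde{E}_G = P\sum_{n=1}^{N} 1/n = P H_N$, so \eqref{eq:SISOasym_gen} follows immediately from the well-known expansion $H_N = \ln(N) + \gamma + O(1/N)$ of the harmonic number, with $\gamma$ the Euler--Mascheroni constant. This case needs no further work.

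\textbf{Opportunistic case.} By Theorem~\ref{t:th_no_et}, $R_{N-1}(P) = P\,\gamma_N$ with $\gamma_N = \sum_{j=0}^{N-1} c_j$. Since $c_j = e^{-\gamma_j}$, the partial-sum sequence $\{\gamma_j\}$ satisfies the first-order recursion
\begin{equation*}
\gamma_{j+1} = \gamma_j + e^{-\gamma_j}, \qquad \gamma_0 = 0,
\end{equation*}
and my target is to show $\gamma_N - \ln(N) \to 0$. I would substitute $u_j = e^{\gamma_j}$, turning the recursion into $u_{j+1} = u_j\, e^{1/u_j}$. Expanding the exponential yields
\begin{equation*}
u_{j+1} - u_j = 1 + \frac{1}{2 u_j} + O(u_j^{-2}),
\end{equation*}
so the increments of $u_j$ tend to $1$. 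Heuristically $u_N \sim N$, and hence $\gamma_N = \ln(u_N) \sim \ln(N)$, which is \eqref{eq:SISOasym}.

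The main obstacle is making this telescoping argument rigorous, because one cannot sum an error of order $1/u_j$ until $u_j$ is already known to grow. I would therefore bootstrap in two steps. First, the elementary inequality $e^{1/u_j} \geq 1 + 1/u_j$ gives $u_{j+1} \geq u_j + 1$ and hence $u_j \geq j+1$, which is the needed crude lower bound. Plugging this into the expansion above makes the higher-order remainder $O(1/j)$, which sums to $O(\ln N)$, so telescoping the refined recursion produces $u_N = N + O(\ln N)$. Taking one final logarithm then yields $\gamma_N = \ln(N) + O((\ln N)/N) \to \ln(N)$, completing the proof of \eqref{eq:SISOasym}.
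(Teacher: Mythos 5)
Your proof is correct, and for the opportunistic case it takes a genuinely different---and in fact more rigorous---route than the paper. The paper defines $U_N = \sum_{j=0}^{N} c_j - \ln(N)$, uses the mean value theorem to write $U_{N+1}-U_N = c_{N+1} - \tfrac{1}{N+\theta}$, observes that this difference tends to zero, and from that concludes that $U_N$ converges, identifying the limit as zero ``by computing $U_N$ for large $N$.'' That argument has a gap: vanishing increments do not imply convergence of a sequence ($\ln N$ itself has vanishing increments and diverges), and the value of the limit is only checked numerically. Your substitution $u_j = e^{\gamma_j}$ turns the recursion $\gamma_{j+1}=\gamma_j+e^{-\gamma_j}$ into $u_{j+1}=u_j e^{1/u_j}$, and the bootstrap $e^{1/u}\ge 1+1/u \Rightarrow u_j \ge j+1$ is exactly the crude growth bound needed to make the telescoping legitimate: the excess increments $u_{j+1}-u_j-1$ are $O(1/u_j)=O(1/(j+1))$ and sum to $O(\ln N)$, giving $u_N = N + O(\ln N)$ and hence $\gamma_N - \ln N = \ln(u_N/N) = O\bigl((\ln N)/N\bigr) \to 0$ with an explicit rate. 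This closes the logical gap in the paper's proof and replaces its numerical step with analysis, at the cost of one extra change of variables. The genie-aided case is handled identically in both proofs via $H_N = \ln N + \gamma + O(1/N)$. One cosmetic remark: as in the paper, the statement $\lim_{N\to\infty} R_{N-1}(P) = P\ln(N)$ must be read as $R_{N-1}(P) - P\ln(N) \to 0$, which is precisely what you establish.
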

 
\begin{proof}
Define a sequence $U_N$ as $U_N \delequal \sum_{j=0}^N c_j-\ln(N).$
The term $U_{N+1}-U_{N}$ can be give by 
\begin{equation}\label{eq:Un_Yn1}
U_{N+1}-U_N= c_{N+1}-\ln(N+1) +\ln(N).
\end{equation}
By using the mean value theorem, we have $\ln(N+1) -\ln(N)=\frac{1}{N+\theta}, $ where, $0<\theta<1$. Therefore, \eqref{eq:Un_Yn1} can be written as 
\begin{equation}\label{eq:Un_mean}
U_{N+1}-U_N= c_{N+1}-\frac{1}{N+\theta}.
\end{equation}
The term $c_{N+1}=e^{-\sum_{j=0}^{N}c_j}$ can also be given in a recursive formula as $c_{N+1}=c_{N}e^{-c_{N}}$. As $c_0=1$ and $0<e^{-x} \leq 1$ for $x \geq 0$, by using this recursive structure of $c_N$, it can be shown that $c_N$ is a decreasing function which converges to zero, i.e., $\lim_{N\rightarrow \infty} c_{N}=0$. Therefore, 
$\lim_{N \rightarrow \infty}U_{N+1}-U_N= c_{N+1}-\frac{1}{N+\theta}=0$. In other words, the function $U_N$ converges. Furthermore, by computing $U_N$ for large $N$, we can show that the function $U_N$ converges to zero for large $N$. Therefore, for opportunistic energy transfer, the expected harvested energy for large $N$ converges as in \eqref{eq:SISOasym}.

For the genie aided method, the maximum expected total harvested energy is given as $\tilde{E}_G=P\sum_{n=1}^{N}1/n$. The definition of Euler-Mascheroni constant is given as $\gamma= \sum_{n=1}^{N}1/n-\mathrm{ln}(N)$ with $N \rightarrow \infty$~\cite{MORTICI2010,lagarias2013}. Hence, $R_{N-1}(P)$ for genie aided method converges as \eqref{eq:SISOasym_gen}.
This completes the proof.
\end{proof}

By comparing \eqref{eq:SISOasym} and \eqref{eq:SISOasym_gen}, we can see that the cost of not being able to foresee the future is $\gamma$ for large $N$.

\subsection{With Channel Estimation Energy ($e_t \neq 0$)}
\label{sec:estimate}
When $e_t\neq 0$, the amount of energy used for channel estimation increases with the number of frames in which channel estimation is carried out. Hence, the available energy for energy transfer decreases with the number of frames. When the channel estimation energy, $e_t$, is significantly large, the available energy $P_{j,(available)}$ decreases rapidly with $j$. Then, at a certain frame $n$, the available energy  is not even adequate to support the channel estimation in next frame, i.e., $P_{n,(available)}<2e_t$. In such situations, all the available energy $P_{n,(available)}$ must be utilized for energy transfer at frame $n$ regardless of the number of frames $N$. This is similar to having an all-or-nothing threshold policy with zero threshold at frame $n$. By considering the previous frame ($n-1$), the available energy can be given as $P_{n-1,(available)} \geq P_{n,(available)} +e_t$. If the maximum harvestable energy in $n-1$ frame, $E_{n-1}=(P_{n-1,(available)}-e_t) |h_{n-1}|^2$, is higher than the expected harvested energy at the frame $n$, all the energy must be transmitted at frame $n-1$ in order to maximize the harvested energy. Thus, by conjecture, it is reasonable to  assume an all-or-nothing threshold policy, which is also motivated by Section~\ref{sec:Problem}. We further note that the thresholds when $e_t\neq 0$ case must be lower than the thresholds for corresponding frames when $e_t=0$ (Section~\ref{sec:Problem}).  Similar to \eqref{eq:EN(P)} by utilizing the recursive structure of the problem in \eqref{eq:OptE_sum_et}, we have
\begin{equation}\label{eq:frames_N}
\begin{aligned}
 \underset{{ \rho_{N-1} } }{\mathrm{max}} ~~~ & R_{N-1}(P)=\mathbb{E}_{h_{N-1}} \Big[{ \displaystyle   s_{N-1}(\rho_{N-1}-e_t)| h_{N-1} |^2} \\ 
  &{~~~~~+~ R_{N-2}(P-\rho_{N-1})  }\Big] \\
 \text{s.~t.} ~~~
 & e_t \leq \rho_{N-1} \leq P, ~R_{N-1}(P) \geq 0.
\end{aligned}
\end{equation}
Here, the constraint $R_{N-1}(P)\geq 0$ is imposed to handle case $P_{n,(available)}<2e_t$ which guarantees all available energy to transfer in frame $n$.

\subsubsection{Optimum Solution}
\label{sec:Problem2}
When the EH policy is based on an all-or-nothing threshold policy,  the optimum threshold policy and the optimum harvested energy is given in the  Theorem~\ref{t:th_et}.

\begin{theorem}\label{t:th_et}
With channel estimation energy ($e_t>0$), the optimum expected total 
harvested energy using $N$ frames over block fading Rayleigh channels is 
\begin{equation}\label{eq:Opt_et}
R_{N-1}(P)=s_{N-1}(P-e_t)\left[\gamma_{N-1}(P)+e^{-\gamma_{N-1}(P)}\right]
\end{equation}
where the optimum thresholds are 
\begin{equation}\label{eq:thres_et}
\gamma_j(P)=\left\{
\begin{array}{ll}
		\frac{P-2e_t}{P-e_t}\left[\gamma_{j-1}(P-e_t)+ e^{-\gamma_{j-1}(P-e_t)}\right], ~ P>2e_t\\
		0, ~~  \text{otherwise.}
	\end{array}
\right.
\end{equation}
\end{theorem}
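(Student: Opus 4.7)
The plan is to proceed by induction on $N$, using the dynamic-programming recursion~(\ref{eq:frames_N}) together with the all-or-nothing threshold structure that the theorem assumes. The base case $N=1$ (frame index $j=0$) is immediate: with a single frame remaining and $P>e_t$, any non-negative realisation of $|h_0|^2$ yields a non-negative harvest, so the optimal threshold is $\gamma_0(P)=0$ and $R_0(P)=(P-e_t)\,\mathbb{E}[|h_0|^2]=P-e_t$, which agrees with~(\ref{eq:Opt_et}) because $\gamma_0(P)+e^{-\gamma_0(P)}=1$.

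For the inductive step, assume the claim holds for $n$ frames and consider $N=n+1$. Parametrising the all-or-nothing policy at the current frame by a threshold $\gamma$ and using $|h_n|^2\sim\mathrm{Exp}(1)$, I would split the expectation in~(\ref{eq:frames_N}) into two regions. On $\{|h_n|^2\geq\gamma\}$ we set $\rho_n=P$, harvest $(P-e_t)|h_n|^2$, and leave no energy for future frames; on $\{|h_n|^2<\gamma\}$ we set $\rho_n=e_t$ (estimation only), harvest nothing, and pass the residual $P-e_t$ forward, whose expected return is $R_{n-1}(P-e_t)$ by the inductive hypothesis. Using $\int_\gamma^\infty x\,e^{-x}\,dx=(\gamma+1)e^{-\gamma}$ and $\int_0^\gamma e^{-x}\,dx=1-e^{-\gamma}$, the objective collapses to $(P-e_t)(\gamma+1)e^{-\gamma}+R_{n-1}(P-e_t)(1-e^{-\gamma})$.

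Differentiating with respect to $\gamma$ and setting the derivative to zero gives the indifference condition $(P-e_t)\gamma=R_{n-1}(P-e_t)$: the current-frame payoff from harvesting at the threshold equals the expected future return from waiting, and a sign-of-derivative check confirms this critical point is a maximum. Substituting the inductive expression for $R_{n-1}(P-e_t)$ reproduces the recursion~(\ref{eq:thres_et}) for $\gamma_n(P)$ in the regime $P>2e_t$; plugging this optimiser back into the objective and applying the algebraic identity $(\gamma+1)e^{-\gamma}+\gamma(1-e^{-\gamma})=\gamma+e^{-\gamma}$ then reduces everything to $(P-e_t)[\gamma_n(P)+e^{-\gamma_n(P)}]$, matching~(\ref{eq:Opt_et}).

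The principal obstacle is handling the boundary regime $P\leq 2e_t$ together with the state indicator $s_j$ in a unified way. When $P-e_t\leq e_t$, the inductive hypothesis forces $R_{n-1}(P-e_t)=0$ (either because $s_{n-1}=0$ or because the prefactor $P-2e_t$ is non-positive), so the indifference condition dictates $\gamma_n(P)=0$ and the objective collapses to $R_n(P)=P-e_t=(P-e_t)[0+e^{0}]$, consistent with the non-negativity clause $R_{N-1}(P)\geq 0$ appended to~(\ref{eq:frames_N}). Verifying that the stationary point lies in $[0,\infty)$, that the $s$-indicators propagate correctly across the recursion, and that the assumed all-or-nothing structure remains feasible under the constraint $\rho_j\geq e_t$ in~(\ref{eq:OptE_sum_et_c}) requires careful bookkeeping but introduces no new ideas beyond those used in the proof of Theorem~\ref{t:th_no_et}.
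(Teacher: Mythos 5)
Your proposal is correct and follows essentially the same route as the paper: induction on $N$, splitting the expectation at the threshold to obtain the objective $(P-e_t)(\gamma+1)e^{-\gamma}+R_{n-1}(P-e_t)(1-e^{-\gamma})$, deriving the indifference condition $(P-e_t)\gamma_n(P)=R_{n-1}(P-e_t)$ from the first-order condition, and substituting back to recover \eqref{eq:Opt_et}. The only cosmetic difference is that the paper packages the sign-of-derivative argument as quasi-concavity plus Kuhn--Tucker--Lagrange conditions, while you differentiate directly; your treatment of the $P\leq 2e_t$ boundary matches the paper's zero-threshold clause.
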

 
\begin{proof}
We use method of induction to prove Theorem~\ref{t:th_et}. 

Base Case: For one frame ($N=1$) over Rayleigh fading, problem in   \eqref{eq:frames_N} can be given as
\begin{equation}\label{eq:OptE_sum_et2}
\begin{aligned}
R_{0}(P)=  \underset{~ \rho_{0}(x)}{\mathrm{max}}  &\int_o^\infty \left(\rho_{0}(x)-e_t\right)x e^{-x} dx   \\
 \text{s.~t.} ~~~
 & e_t \leq \rho_{0}(x) \leq P.
\end{aligned}
\end{equation}
It is obvious that the optimum transmit energy $\rho^*_0(x)$ is $P$, i.e., all the available energy is transmitted in the same frame regardless of the channel condition. Therefore, $R_0(P)=s_0(P-e_t)$ with threshold $\gamma_0(P)=0$.

Inductive hypothesis: Assume that the solution in \eqref{eq:Opt_et} is true for $N=n$. Then, we have
\begin{equation}\label{eq:Opt_etn}
R_{n-1}(P)=s_{n-1}(P-e_t)\left[\gamma_{n-1}(P)+e^{-\gamma_{n-1}(P)}\right]
\end{equation}
where the optimum thresholds are given by \eqref{eq:thres_et}.

Inductive Step: Consider the case $N=n+1$. The optimization problem in \eqref{eq:frames_N} can be written as
{\small\begin{equation}\label{eq:frames_Np1}
\begin{aligned}
 \underset{{ \rho_n } }{\mathrm{max}} ~~~ & R_n(P)=\mathbb{E}_{h_{n}} \Big[{ \displaystyle   s_n(\rho_n-e_t)| h_n |^2 + R_{n-1}(P-\rho_n)  }\Big] \\
 \qquad & \text{s.~t.} ~~~
  e_t \leq \rho_n \leq P, ~~ R_n(P) \geq 0.
\end{aligned}
\end{equation}}
By using the threshold policy assumption, we have two cases: i) $\rho_n=e_t$ for $| h_n |^2 < \gamma_n(P)$ which leads to $s_{n-1}=1$; and ii) $\rho_n=P$ for $| h_n |^2 > \gamma_n(P)$ which leads to $s_{n-1}=0$, where $\gamma_n(P)$ is the threshold at frame $n$. Thus, the objective function of \eqref{eq:frames_Np1} can be given as 
{\small\begin{equation}\label{eq:RN}
\begin{aligned}
& R_{n}(P)= s_n \bigg[(P-e_t)\Big(1+\gamma_n(P) \Big) e^{-\gamma_n(P)}~+ (P-2e_t)\\
&\Big(\gamma_{n-1}(P-e_t)+ e^{-\gamma_{n-1}(P-e_t)}\Big)
\Big(1-e^{-\gamma_n(P)}\Big)\bigg]
\end{aligned}
\end{equation}}

 Now we analyze the behavior of $R_{n}(P)$. The function in \eqref{eq:RN} has the form of $f:\mathbb{R}_+ \rightarrow \mathbb{R}_+$ with
$f(x)= a\left(1-e^{-x}\right)+b\left(1+x\right)e^{-x}$
where $a,b \in \mathbb{R}$. Then, we have
$\nabla f(x)= \left(a-bx\right)e^{-x}$.
As $e^{-x}>0$ for $x \in \mathbb{R}_+$, we have the following three cases. i) the function $f(x)$ is strictly increasing when $bx<a$ because $\nabla f(x)>0$; ii) the unique stationary point of $f(x)$ is obtained when $bx=a$; and iii) the function $f(x)$ is strictly decreasing when $bx>a$ because $\nabla f(x) <0$. This proves that the function $f(x)$ has a unique maximum. Furthermore, we note that strictly increasing/ decreasing functions are quasi-concave. Therefore, the function $f(x)$ is quasi-concave. This means $R_n(P)$ is quasi-concave. The Kuhn-Tucker-Lagrange (KTL) conditions~\cite{Arrow1961} for the problem in \eqref{eq:frames_Np1} can be given as 
\begin{subequations}\label{eq:KKT_R1}
\begin{alignat}{4}
 ( 1 + \lambda) \nabla_{\gamma_n} R_n(P) \leq 0   \label{eq:l1}\\
 \gamma_n ( 1 + \lambda) \nabla_{\gamma_n} R_n(P) =0 \label{eq:l2}\\
 \lambda \gamma_n = 0 \label{eq:l5}\\
 R_n(P) \geq 0, ~~\gamma_n \geq 0, ~~\lambda \geq 0 \label{eq:l6}
\end{alignat}
\end{subequations}
 By solving this system of equations for $\gamma_n(P)$ and $R_n(P)$, we have  \eqref{eq:Opt_et} and \eqref{eq:thres_et} for $N-1=n$. Based on the rule of induction, Theorem~\ref{t:th_et} is true for all $N > 0$.  This completes the proof.
\end{proof}
\section{Numerical and Simulation results}
\label{sec:MaxMinResults}

We use normalized energy unless otherwise specified, channels $h_j \sim \mathcal{CN}(0,1)$ and conversion efficiency $\eta =1$.  

\subsection{Without Channel Estimation Energy ($e_t=0$)}
\label{sec:MaxMinResultsnoet}

In this case, we sequentially compare $| h_j |^2$ with $\gamma_j$ in \eqref{eq:thresh_no_et}  and obtain optimum energy transfer $\rho^*_j$ in  \eqref{eq:cj}. Then, we calculate the total harvested energy in all $N$ frames. For the genie-aided energy transfer scheme, we choose the frame $j^*$ with maximum $| h_j |^2$ among the $N$ frames, and transmit all the energy available at that frame. For equal energy transfer, we transfer $P/N$ amount of energy in each frame. For the random energy transfer, we transfer all the energy available in a randomly selected frame $j \in \{0,N-1\}$. The average total harvested energy is calculated for $10^6$ channel realizations. 

Fig.~\ref{f:Energy_no_et} shows the variation of the expected harvested energy with the number of frames under four schemes.  The analytical results for opportunistic and genie-aided energy transfer schemes in \eqref{eq:Sol} and \eqref{eq:gen} match closely with the simulation results which verifies our analysis. The genie-aided scheme outperforms all other schemes, and optimum energy transfer scheme outperforms equal and random energy transfer schemes which have harvested energy as one. Compared to equal and random energy transfer schemes, the gain of expected harvested energy is approximately 4 times and 3.5 times for genie-aided energy transfer scheme and optimum EH scheme, respectively, at $N=30$.

Fig.~\ref{f:Converge_no_et} shows the variation of tern $R_{N-1}(P)-\ln (N)$ with the number of frames $N$. When $N$ increases, the simulated values approach to $\gamma$ and 0 for the genie-aided and the optimum energy transfer schemes, respectively. This verifies Theorem~\ref{t:Asym}. 

Fig.~\ref{f:Outage} shows the variation of the average outage probability with the transmit energy when the harvested energy threshold ($E_{Th}$) is 10dBm. We calculate the outage probability as $\mathbb{P}\displaystyle{\left(\tilde{E}_T < E_{Th} \right)}$. The Outage probabilities of genie-aided scheme improve as order $N$, i.e., order 1,2, and 3 for $N=1,2,$ and 3, respectively, because genie-aided scheme makes the decision based on $N$ independent statistics which helps for the diversity gain. Outage probabilities of optimum energy transfer scheme improve as order one irrespective of $N$, because its transmission is based on current frame only. However, it helps to provide an array gain, i.e., outage probability of approximately 0.1 is achieved with $P=20,~18,$ and 16dBm for N=1,2, and 3, respectively. Outage probabilities of all other cases improve also as order one, but without any array gain.
\begin{figure}
  \centering
  \includegraphics[width=1\figwidth]{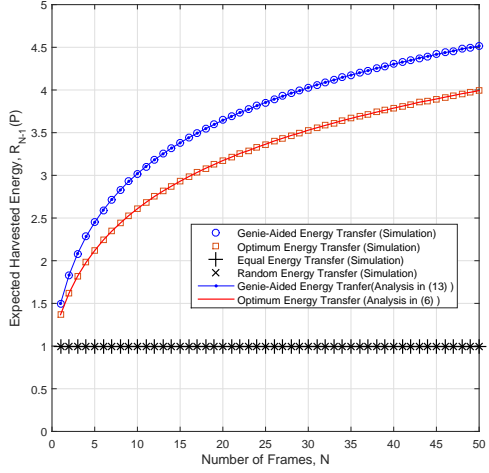}
\caption{Expected harvested energy with  number of frames for four energy transfer schemes when $e_t=0$.}
  \label{f:Energy_no_et}
\end{figure}
\begin{figure}
  \centering
  \includegraphics[width=1\figwidth]{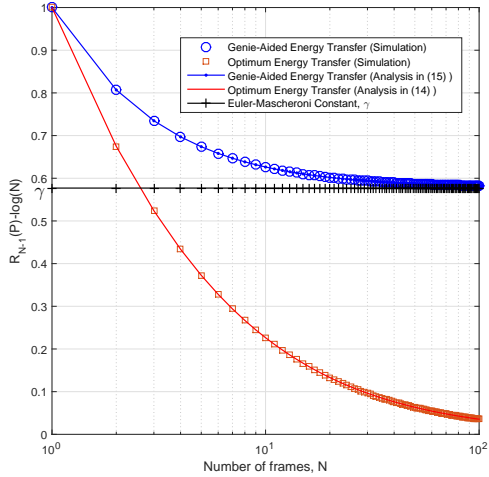} \\
\caption{ Asymptotic behavior of expected harvested energy with opportunistic and genie-aided schemes when $e_t=0$. }
  \label{f:Converge_no_et}
\end{figure}
\begin{figure}
  \centering
  \includegraphics[width=1\figwidth]{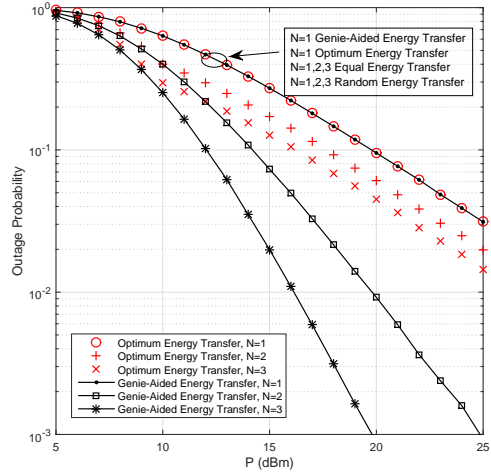} 
\caption{ Outage Probability of expected harvested energy  with transmit power $P$ for the energy threshold 10\,dBm and $e_t=0$.}
 \label{f:Outage}
\end{figure}

\subsection{With Channel Estimation Energy ($e_t > 0$)}
\label{sec:MaxMinResultset}

For $e_t >0$ case, in frame $N-1$, we compare $| h_{N-1}|^2$ with $\gamma_{N-1}(P)$ in \eqref{eq:thres_et} and obtain the optimum energy transfer $\rho^*_{N-1}$ in \eqref{eq:thres_et}. If  $\rho^*_{N-1}=e_t$, we update the available energy at next frame ($N-2$) as $P_{(N-2,available)}=1-e_t$. This procedure is repeated for all frames $N$.  For the genie aided energy transfer scheme, we calculate energy available for harvesting at frame $j$ as $P_{j,(available)}=1-(N-j-1)e_t$ and choose the frame $j^*$  with maximum $(P_{j,(available)} -e_t) |h_j |^2$ among the $N$ frames, and transmit available energy $P_{j^*,(available)}$ at this frame.  For the equal energy transfer, we transfer $P/N-e_t$ amount of energy in each frame.  For the random energy transfer, we transfer all the energy available, $P-e_t$ in a randomly selected frame $j \in \{0,N-1\}$. We calculate the average total harvested energy by using $10^5$ set of channel realizations.
\begin{figure}
  \centering
  \includegraphics[width=1\figwidth]{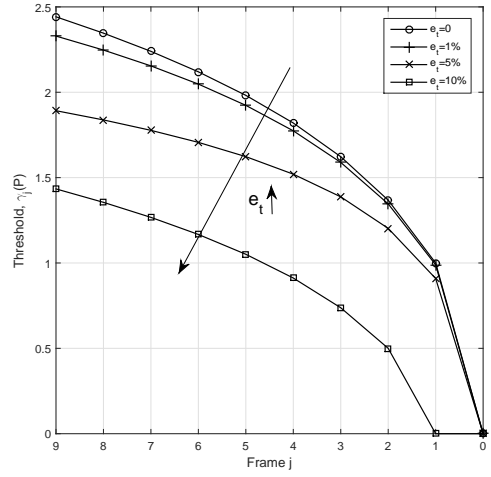}
\caption{ Threshold of frame $j$, $\gamma_j(P)$, with frame index $j$ for different $e_t$ values when $N=10$. }
  \label{f:Thresholds} 
\end{figure}
Fig.~\ref{f:Thresholds} shows the variation of the thresholds, $\gamma_j(P)$ for each frame with $N=10$ and $e_t=0,~0.01P,~0.05P$ and $0.10P$. For a given frame number, the threshold reduces when $e_t$ increases because the energy available for the future frame reduces significantly with large $e_t$. Therefore, harvesting at the current frame which may have a worse channel gain may result in large amount of harvested energy than harvesting in the future frame which may have a better channel gain but with low available energy. Furthermore, the threshold $\gamma_0(P)=0$ for all cases of $e_t$ because all available energy must be harvested at the last frame regardless of the channel. The threshold $\gamma_1(P)=0$ for frame 1 with $e_t=0.10P$ demonstrates the effect of $R_n(P)\geq 0$ constraint in (\ref{eq:frames_Np1}). As the available energy $P_{1,(available)} = 2e_t$, all available energy must be transmitted at frame 1 regardless of the channel quality. Hence the threshold $\gamma_1(P)$ is zero. 

Fig.~\ref{f:et_10_percent} shows the variation of expected harvested energy with the number of frames $N$ for four schemes when  $e_t=0.10P$. When $N=1$, the expected harvested energy from all four schemes is $0.9$. As there is only one frame, all the energy available, $1-e_t=0.9$, is transmitted in all four schemes. The harvested energy converges from 1 to 1.67 and from 1 to 1.50 for genie-aided and optimum energy transfer schemes, respectively, because no energy is available even for channel estimation when $N>10$ at $e_t=0.10P$. The harvested energy with the equal energy transfer scheme  drops to zero  because the energy utilized for energy transfer decreases as all the frames are used. When $N=10$, all the energy is utilized for channel estimation leaving no energy for energy transfer. Thus, the expected harvested energy for $N=10$, $E_9(P)$ is zero for equal energy transfer scheme. The expected harvested energy with random energy transfer scheme remains constant at $0.9$  because only one frame is utilized for energy transfer. Thus, the random energy transfer scheme outperforms the equal energy transfer scheme. 
\begin{figure}
  \centering
  \includegraphics[width=1\figwidth]{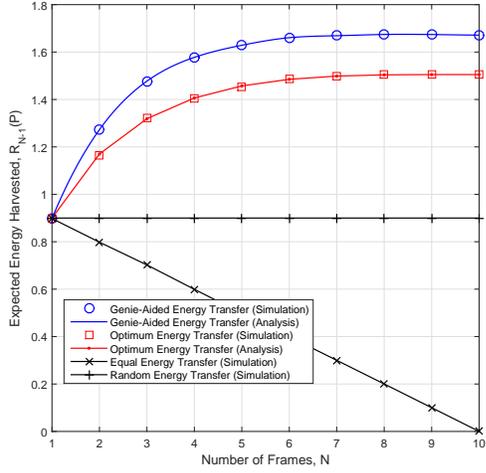} 
\caption{ Expected harvested energy with number of frames for four energy transfer schemes when $e_t=0.10P$ }
  \label{f:et_10_percent}
\end{figure}



\section{Conclusion}
\label{sec:Conclusion}

This paper considers a 
network with an energy constrained power transmitter and an EH user which has strict deadline on time. We obtain an opportunistic energy transfer scheme to maximize the expected total harvested energy without channel estimation energy. We show that the optimum policy is a threshold policy. We also derive the maximum expected total harvested energy using the genie-aided scheme for performance comparison. By analyzing the asymptotic behavior of two schemes, we show that the gain of non-causality can be given by the Eular-Mascheroni constant. Based on a threshold policy, we solve the problem with channel estimation energy.
We compare the performance using genie-aided energy transfer, equal energy transfer and random energy transfer schemes. We observe that the thresholds for each frame reduces with large channel estimation energy.



\newpage



\end{document}